\documentclass[12pt]{article}

\usepackage{graphicx}
\usepackage{geometry}
\usepackage{amssymb}
\usepackage{amsmath ,bm}

\newcommand{\bey}{\begin{eqnarray}}
\newcommand{\eey}{\end{eqnarray}}
\newcommand{\beq}{\begin{equation}}
\newcommand{\eeq}{\end{equation}}

\newtheorem{thm}{Theorem}[section]

\newtheorem{cor}{Corollary}[section]

\begin{document}


\begin{center}

{\large \bf Maximum Likelihood Criterion for Non-nested Model Selection}

\bigskip \bigskip
Min Tsao 
\vspace{0.1in}
\\{\small Department of Mathematics \& Statistics, University of Victoria, Canada}

\end{center}

\bigskip

\begin{abstract}
Penalization is a widely used approach to model selection with roots in information theory and Bayesian inference. We study a model selection problem involving non-nested candidate models for which penalization is counterproductive. We propose a Maximum Likelihood Criterion for this non-nested setting that selects the candidate model with the highest maximum likelihood. This criterion does not take into consideration the number of parameters of a candidate model. It is well-suited for situations where all candidate models are regarded as equal with no preference for models having fewer parameters. We establish the consistency of this criterion and compare its performance with that of existing penalization-based criteria.

\vspace{0.2in}
\noindent \textsc{Key words and phrases:}
model selection, non-nested models, maximum likelihood criterion, Kullback-Leibler divergence, information criteria.

\vspace{0.1in}
\noindent \textsc{MSC 2020 subject index:}   Primary 62J05, 62J12; secondary 62F03.

\end{abstract}

\section{Introduction}
Penalization is a commonly used strategy in statistics with applications in many areas such as density estimation, functional data analysis, Bayesian statistics and regularized regression. In model selection, this strategy is formulated via penalizing an objective function, usually the log-likelihood of a model, by a term proportional to an $\ell_p$ norm of the parameter vector of the model.  Akaike Information Criterion (AIC) (Akaike, 1974) and Bayesian Information Criterion (BIC) (Schwarz, 1978) are two well-known examples with penalty terms proportional to the $\ell_0$ norm. 
These penalization-based methods are effective in alleviating the negative impact of overfitting in model selection. Nevertheless, we argue that overfitting is of concern only in nested model selection problems, such as regression model selection, where better fit comes at the expense of increased model complexity. There are situations where practitioners want to identify the best model for observed data among a small set of non-nested models. In such non-nested situations, the notion of relative model complexity and the issue of overfitting become irrelevant. For example, when an engineer needs to decide which model among Exponential, Log-normal, and Weibull models that best describes a set of observed reliability data, all three models are equally important. He is not concerned about overfitting and does not have a preference for the Exponential model just because it has fewer parameters. In such situations, model selection should be based on the fit of candidate models alone.

In this note, we consider the setting of a small set of parametric candidate models that are non-nested and equally important, and no preference or advantage is assigned to models having fewer parameters. We propose a Maximum Likelihood Criterion (MLC) that selects the model with the best fit; that is, MLC selects the model with the highest maximum likelihood. It may seem unconventional to compare maximized likelihood values of different parametric models, as such comparisons fall outside the traditional scope of maximum likelihood theory and practice. Furthermore, AIC and BIC do not compare the maximum likelihood of different models directly. Instead, they compare their penalized maximum log-likelihoods which have interpretations related to model distance and plausibility measures from information theory and Bayesian inference. It is such interpretations that justify comparing the penalized maximum log-likelihood of different models. We assume that the true model that generated the data is in the set of candidate models. Under this assumption, we show that MLC is consistent in that the probability that it selects the true model approaches one as the sample size goes to infinity. In other words, asymptotically the true model has the highest maximum likelihood. This provides an asymptotic justification for comparing the maximum likelihood of non-nested models directly for the purpose of model selection. It also gives a justification for ignoring the number of parameters in such candidate models. Our proof of consistency is based on the positivity of Kullback-Leibler divergence (Kullback  and  Leibler, 1951). We compare MLC with AIC and BIC in terms of selection accuracy via simulation, and demonstrate that in the setting where candidate models are non-nested and considered equal regardless their number of parameters, the MLC is a fairer criterion for model selection than AIC and BIC. We also give a real-data example to compare AIC, BIC and MLC.

\section{Maximum Likelihood Criterion}

Consider a set of $M$ candidate models, denoted by their families of density functions $\mathcal{M}_1 = \{f_1(x; \theta_1): \theta_1 \in \Theta_1\}, \dots, \mathcal{M}_M = \{f_M(x; \theta_M): \theta_M \in \Theta_M\}$. We assume these models are non-nested in the sense that for $j\neq k$ and any $\theta_j \in \Theta_j$ and $\theta_k \in \Theta_k$, $f_j(x, \theta_j)$ and $f_k(x, \theta_k)$ are not equal almost everywhere. Denote by $\mathbf{X}_n = \{X_1, \dots, X_n\}$ a random sample of size $n$ from a data generating distribution $f_T(x, \theta^*_T) \in \mathcal{M}_T \in \{\mathcal{M}_1, \dots, \mathcal{M}_M\}$.  In the context of model selection with observed sample $\mathbf{X}_n$, the data generating distribution $f_T(x, \theta^*_T)$ and the family it belongs to $\mathcal{M}_T$ are both unknown. We will refer to $\mathcal{M}_T$ as the true model and the other ($M-1$) models as misspecified models. Our objective is to identify the true model among the $M$ candidate models using $\mathbf{X}_n$.

The proposed Maximum Likelihood Criterion (MLC) selects the candidate model with the highest maximum likelihood (or equivalently the highest maximum log-likelihood) as the true model. Specifically, denote the MLC selected model by $\text{MLC}(\mathbf{X}_n)$, we have
\begin{equation}
\text{MLC}(\mathbf{X}_n) =  \arg\max_{\{M_j: 1\leq j \leq M \}} \ell_j(\hat{\theta}_j), \label{def}
\end{equation}
where $  \ell_j(\hat{\theta}_j)=\sum_{i=1}^n \log f_j(X_i; \hat{\theta}_j)$ and $\hat{\theta}_j$ is the maximum likelihood estimator (MLE),
  \[
        \hat{\theta}_j = \arg\max_{\theta \in \Theta_j} \sum_{i=1}^n \log f_j(X_i; \theta).
    \]

The MLC is consistent in that
\begin{equation}
    \lim_{n \to \infty} P\left( \text{MLC}(\mathbf{X}_n) = \mathcal{M}_T \right) = 1.  \label{consistency}
\end{equation}
We now prove the consistency (\ref{consistency}) for $M=2$. The generalization to any fixed $M>2$ is straightforward. We first prove the following theorem which establishes the dominance of the likelihood of the data generating distribution over that of misspecified models.

\begin{thm} \label{thm1}
    Let $f(x)$ be the true probability density function and $\{g(x;\theta): \theta \in \Theta\}$ be a family of density functions. Assume the following conditions hold: 
    \begin{enumerate}
        \item[(C1)] The true density $f(x)$ is not in the $g(x;\theta)$ family in the sense that for any $\theta\in \Theta$,  $f(x)$ and $g(x;\theta)$ are not equal almost everywhere.
        \item[(C2)] $\Theta$ is compact.
        \item[(C3)] Function $\log g(x;\theta)$ is continuous in $\theta$.
        \item[(C4)] There exists an integrable function $h(x)$ such that $|\log g(x;\theta)| \le h(x)$ for all $\theta \in \Theta$.
    \end{enumerate}
    Let $\{X_1, \dots, X_n\}$ be a random sample from $f(x)$. Denote by $\hat{\theta}_n$ the MLE of $\theta$ for model $g(x,\theta)$, i.e., $ \hat{\theta}_n = \arg\max_{\theta \in \Theta} \sum_{i=1}^n \log g(X_i; \theta).$ Then,
    \begin{equation}
        \lim_{n \to \infty} \mathbb{P}\left( \sum_{i=1}^n \log f(X_i) \ge \sum_{i=1}^n \log g(X_i; \hat{\theta}_n) \right) = 1. \label{r1}
    \end{equation}
\end{thm}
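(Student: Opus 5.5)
The plan is to reduce the statement to a uniform law of large numbers for the misspecified log-likelihood, and then use strict positivity of the Kullback--Leibler divergence on the compact set $\Theta$. Define
\[
L(\theta) = \mathbb{E}_f[\log g(X;\theta)], \qquad \ell^* = \mathbb{E}_f[\log f(X)].
\]
We assume $\ell^*$ is finite, i.e.\ $\log f$ is integrable; this is implicit in the statement and needed to make sense of the divergence. By (C4), $L(\theta)$ is finite for every $\theta$. By (C3) and dominated convergence with dominating function $h$, $L$ is continuous on $\Theta$. For each $\theta$,
\[
\ell^* - L(\theta) = \mathrm{KL}(f\,\|\,g(\cdot;\theta)) \ge 0,
\]
and by (C1) the inequality is strict, since $\mathrm{KL}=0$ forces $f=g(\cdot;\theta)$ almost everywhere. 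Because $\Theta$ is compact (C2) and $\theta\mapsto \ell^*-L(\theta)$ is continuous and strictly positive, it attains a minimum
\[
\delta := \min_{\theta\in\Theta}\bigl(\ell^*-L(\theta)\bigr) > 0.
\]
This uniform gap is the key quantity.

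Next we establish the uniform strong law
\[
\sup_{\theta\in\Theta}\Bigl|\tfrac1n\sum_{i=1}^n \log g(X_i;\theta) - L(\theta)\Bigr| \to 0 \quad \text{a.s.}
\]
This is the classical argument of Jennrich (or Wald), and it uses exactly (C2)--(C4). For fixed $\theta$ and $\rho>0$, let $u(x,\theta,\rho)=\sup_{\|\theta'-\theta\|<\rho}\log g(x;\theta')$. By continuity, $u\downarrow \log g(x;\theta)$ as $\rho\downarrow 0$, and $|u|\le h$, so $\mathbb{E}\,u(X,\theta,\rho)\to L(\theta)$ by dominated convergence. Given $\epsilon>0$, choose $\rho_\theta$ so that $\mathbb{E}\,u < L(\theta)+\epsilon$. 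Cover $\Theta$ by finitely many such balls, using compactness, and apply the ordinary SLLN to each of the finitely many $u$'s. This yields
\[
\limsup_n \sup_\theta \Bigl(\tfrac1n\sum_{i=1}^n \log g(X_i;\theta) - L(\theta)\Bigr) \le 2\epsilon.
\]
Only this one-sided bound is actually needed. Measurability of the suprema follows from continuity in $\theta$ and separability.

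Finally, combine the two pieces. Since $\hat\theta_n$ maximizes the sample log-likelihood,
\[
\tfrac1n\sum_{i=1}^n \log g(X_i;\hat\theta_n) = \sup_\theta \tfrac1n\sum_{i=1}^n \log g(X_i;\theta) \le \sup_\theta L(\theta) + o(1) \le \ell^* - \delta + o(1) \quad \text{a.s.}
\]
Meanwhile, $\tfrac1n\sum_{i=1}^n \log f(X_i)\to \ell^*$ a.s.\ by the SLLN. Hence, almost surely, for all large $n$,
\[
\tfrac1n\sum_{i=1}^n \log f(X_i) - \tfrac1n\sum_{i=1}^n \log g(X_i;\hat\theta_n) \ge \delta/2 > 0.
\]
Almost sure eventual occurrence implies that the probability of the event in (\ref{r1}) tends to one, and in fact the inequality there holds strictly.

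The main obstacle is the uniform law in the second step. A pointwise LLN at a fixed $\theta$ is insufficient because $\hat\theta_n$ is data-dependent. Notably, we never need consistency of $\hat\theta_n$ toward a pseudo-true value: the one-sided uniform upper bound together with the positive gap $\delta$ is enough. The gap itself relies on compactness to turn pointwise positivity of the KL divergence into a uniform bound.
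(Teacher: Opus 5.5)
Your proposal is correct and follows essentially the same route as the paper: a strictly positive minimal KL gap over the compact $\Theta$, a uniform law of large numbers (Jennrich/Wald) to bound the maximized misspecified log-likelihood, and the SLLN for $\log f$. Your version is somewhat more careful, but these are refinements of the same argument rather than a different approach. Specifically, you justify that the minimum KL divergence is attained via continuity of $L$ from (C3)--(C4) and dominated convergence, which the paper attributes to (C1)--(C2) alone. You also make explicit the integrability of $\log f$ that the paper uses implicitly, and you work almost surely with only the one-sided uniform bound.
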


\vspace{0.2in}

\noindent {\bf Proof of Theorem \ref{thm1}.} 
    The Kullback--Leibler (KL) divergence between $f$ and $g(\cdot;\theta)$ is
    \[
        D_{\text{KL}}(f \| g(\cdot;\theta)) = \mathbb{E}_f\left[ \log \frac{f(X)}{g(X;\theta)} \right] = \mathbb{E}_f[\log f(X)] - \mathbb{E}_f[\log g(X;\theta)].
    \]
    Under conditions (C1)-(C2),  the minimum KL divergence over $\Theta$ is positive. That is, let
        \[
        \theta^* = \arg\max_{\theta \in \Theta} \mathbb{E}_f[\log g(X;\theta)].
    \]
    Then,
     \begin{equation}
        \Delta := \mathbb{E}_f[\log f(X)] - \mathbb{E}_f[\log g(X;\theta^*)] = D_{\text{KL}}(f \| g(\cdot;\theta^*)) > 0.  \label{KL}
     \end{equation}
     
\vspace{0.1in}
     Under conditions (C2)-(C4), by the Uniform Law of Large Numbers (Jennrich, 1969), we have
        \begin{equation}
        \sup_{\theta \in \Theta} \left| \frac{1}{n} \sum_{i=1}^n \log g(X_i;\theta) - \mathbb{E}_f[\log g(X;\theta)] \right| \xrightarrow{p} 0.  \nonumber 
    \end{equation}
    Consequently,
    \begin{equation}
        \frac{1}{n} \sum_{i=1}^n \log g(X_i; \hat{\theta}_n) \xrightarrow{p} \mathbb{E}_f[\log g(X;\theta^*)]. \label{2.1}
    \end{equation}
    Also, by the Strong Law of Large Numbers,
    \begin{equation}
        \frac{1}{n} \sum_{i=1}^n \log f(X_i) \xrightarrow{a.s.} \mathbb{E}_f[\log f(X)]. \label{2.2}
    \end{equation}

   Equations (\ref{KL}),  (\ref{2.1}) and  (\ref{2.2})  imply that 
    \begin{eqnarray}
        \frac{1}{n} \sum_{i=1}^n \log f(X_i) - \frac{1}{n} \sum_{i=1}^n \log g(X_i; \hat{\theta}_n) &\xrightarrow{p}& 
        \mathbb{E}_f[\log f(X)] - \mathbb{E}_f[\log g(X;\theta^*)]   \nonumber \\
        &=& \Delta > 0.  \label{2.3}
    \end{eqnarray}
This implies that

\begin{equation}
        \mathbb{P}\left( \sum_{i=1}^n \log f(X_i) \ge \sum_{i=1}^n \log g(X_i; \hat{\theta}_n) \right) \rightarrow 1 \hspace{0.2in} 
        \text{as $n\rightarrow \infty$}. \label{2.7}
    \end{equation}
    which proves (\ref{r1}).    \hfill $\Box$

\vspace{0.2in}
The following corollary gives the consistency of MLC for $M=2$.

\begin{cor} \label{cor1}
    Let  $\{f(x;\theta): \theta \in \Theta_f\}$ and $\{g(x;\theta): \theta \in \Theta_g\}$ be two non-nested family of densities.  Let $\{X_1, \dots, X_n\}$ be a random sample from $f(x, \theta_T)\in \{f(x;\theta): \theta \in \Theta_f\}$. Denote by $\hat{\theta}_{f,n}$ and $\hat{\theta}_{g,n}$ the MLE's of $\theta$ based on the sample where
    \[
        \hat{\theta}_{f,n} = \arg\max_{\theta \in \Theta_f} \sum_{i=1}^n \log f(X_i; \theta) \hspace{0.1in} \text{and}  \hspace{0.1in} 
        \hat{\theta}_{g,n} = \arg\max_{\theta \in \Theta_g} \sum_{i=1}^n \log g(X_i; \theta).
    \]
    Assume conditions (C1)-(C4) hold for $f(x,\theta_T)$ and $\{g(x;\theta): \theta \in \Theta_g\}$. Then,
    \begin{equation}
        \lim_{n \to \infty} \mathbb{P}\left( \sum_{i=1}^n \log f(X_i, \hat{\theta}_{f,n}) \ge \sum_{i=1}^n \log g(X_i; \hat{\theta}_{g,n}) \right) = 1.  \label{2.10}
    \end{equation}
\end{cor}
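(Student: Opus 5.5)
The plan is to reduce Corollary \ref{cor1} to Theorem \ref{thm1} by a sandwich argument that uses the defining property of the MLE $\hat{\theta}_{f,n}$. Set $f(x) := f(x,\theta_T)$, the true density. Under the stated hypotheses, conditions (C1)--(C4) hold for this $f$ and the family $\{g(x;\theta):\theta\in\Theta_g\}$; non-nestedness supplies (C1). Theorem \ref{thm1} then applies directly. Writing
\[
A_n = \left\{ \sum_{i=1}^n \log f(X_i;\theta_T) \ge \sum_{i=1}^n \log g(X_i;\hat{\theta}_{g,n}) \right\},
\]
we obtain $\mathbb{P}(A_n)\to 1$.

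The second step is deterministic. Assume the true parameter lies in the parameter space, $\theta_T\in\Theta_f$, which the statement implicitly requires since the sample comes from a member of the $f$ family. Because $\hat{\theta}_{f,n}$ maximizes $\sum_{i=1}^n \log f(X_i;\theta)$ over $\Theta_f$, we have, for every sample,
\[
\sum_{i=1}^n \log f(X_i;\hat{\theta}_{f,n}) \ge \sum_{i=1}^n \log f(X_i;\theta_T).
\]
On the event $A_n$, chaining this with the inequality defining $A_n$ gives the event in (\ref{2.10}). Hence that event contains $A_n$, its probability is at least $\mathbb{P}(A_n)$, and (\ref{2.10}) follows by letting $n\to\infty$.

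No asymptotic theory for $\hat{\theta}_{f,n}$ is needed: neither its consistency nor any regularity conditions on the $f$ family are used, only that the maximizer exists and is attained. That existence is the only real point requiring care. If $\hat{\theta}_{f,n}$ is only an approximate maximizer, or measurability is an issue, one replaces the inequality above by the supremum, which is still at least the value at $\theta_T$, and the argument is unchanged. The main substantive content, namely the positivity of the minimal KL divergence and the uniform LLN for the misspecified family, is entirely absorbed by Theorem \ref{thm1}.
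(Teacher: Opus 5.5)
Your proposal is correct and is essentially the paper's own proof. You apply Theorem \ref{thm1} with $f(x)=f(x,\theta_T)$ to get (\ref{2.7}), then use the maximizing property $\sum_{i=1}^n \log f(X_i;\hat{\theta}_{f,n}) \ge \sum_{i=1}^n \log f(X_i;\theta_T)$ to show that the event in (\ref{2.10}) contains the event in (\ref{2.7}).
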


\noindent {\bf Proof of Corollary \ref{cor1}.} 
By the definition of $\hat{\theta}_{f,n}$, we have
\[
    \sum_{i=1}^n \log f(X_i, \hat{\theta}_{f,n}) \ge \sum_{i=1}^n \log f(X_i; {\theta}_T).
\]  
 This and (\ref{2.7}) imply (\ref{2.10}), and thus the consistency of MLC (\ref{consistency}).
 \hfill $\Box$

\vspace{0.1in}
\textsc{Remark 1.} The above results also lead to a class of consistent penalization-based criteria that includes AIC and BIC as special cases.  To see this, let $\ell_f=\sum_{i=1}^n \log f(X_i)$ and $\ell_g(\hat{\theta}_{n})=\sum_{i=1}^n \log g(X_i; \hat{\theta}_n)$.  By (\ref{2.3}), $n^{-1}\{\ell_f - \ell_g(\hat{\theta}_n) \}$ converges in probability to the minimum KL divergence $\Delta$ which is a positive constant. Thus, $\ell_f - \ell_g(\hat{\theta}_n)\sim n\Delta$ in probability. 
Similarly, for the two families of models in Corollary \ref{cor1},  let $\ell(\hat{\theta}_{f,n})= \sum_{i=1}^n \log f(X_i, \hat{\theta}_{f,n})$ and $\ell(\hat{\theta}_{g,n})= \sum_{i=1}^n \log g(X_i, \hat{\theta}_{g,n})$. 
Then, $n^{-1}\{ \ell(\hat{\theta}_{f,n}) - \ell(\hat{\theta}_{g,n}) \}$ also converges to $\Delta$ and $\ell(\hat{\theta}_{f,n}) - \ell(\hat{\theta}_{g,n})\sim n\Delta$ in probability. 
Define a class of penalized likelihood criteria that selects the model with the minimum penalized log-likelihood 
\begin{equation}
\text{PLC}(\mathbf{X}_n) = \arg\min_{\{f, g\}}  -\ell(\hat{\theta}_{n})+p(d, n),  \label{p.cri}
\end{equation}
where $p(d, n)$ is a positive penalty function and $d$ is the dimension of the parameter vector. This class is consistent under the condition that $p(d, n)=o(n)$. This is because asymptotically the difference between the log-likelihood of the true model $\ell(\hat{\theta}_{f,n})$ and that of the misspecified model $\ell(\hat{\theta}_{g,n})$ is $n\Delta$ which dominates the penalty term of size $o(n)$. Thus, the true model will have a smaller penalized log-likelihood than that of the misspecified model with probability tending to one. This class of penalization-based criteria includes the AIC whose $p(d, n)=2d$ and the BIC for which $p(d, n)=d \log n$. Nevertheless, in the next section, we demonstrate via numerical examples that penalization is unnecessary in this non-nested setting as it introduces bias against models with more parameters. Without a penalty term, the MLC is  fairer in finite sample applications.

\vspace{0.1in}

\textsc{Remark 2.}  In a paper on sparse maximum likelihood estimation of regression models, Tsao (2025) showed that a model of maximum likelihood with $p^*$ or more variables (here, $p^*$ is the number of active variables) is asymptotically correct in that it contains all active variables. The proof of this result in Tsao (2025) is based on a combination of maximum likelihood theory and geometric arguments. The fact that $\ell(\hat{\theta}_{f,n}) - \ell(\hat{\theta}_{g,n})\sim n\Delta$  may be used for an alternative proof based on the positivity of KL divergence. 

\section{Numerical Examples}
In this section, we compare MLC, AIC and BIC through numerical examples. For simplicity of presentation, we consider only three pairs of models in 
Table \ref{tb-1}. 

The first pair is Exponential model and Log-normal model. The second pair is Uniform model and Normal model. These two pairs satisfy conditions in Theorem \ref{thm1} and Corollary \ref{cor1}, so AIC, BIC, and MLC are all consistent for these examples. Note that strictly speaking the parameter spaces for models in these two pairs of examples are not compact but we can assume that they are sufficiently large compact sets so the compactness condition holds. In practical applications, this compactness assumption is supported by the nature of the observed data. For example, when the observed data are lifetimes of a certain brand of lightbulbs and \textsc{Exp$(\lambda)$} model is fitted to the data, the parameter space of \textsc{Exp$(\lambda)$}  is naturally a finite interval, say $\lambda\in [a, b] \subset (0, \infty)$, as the parameter $\lambda$ of the exponential distribution is its mean. When used as a model for the lifetime of lightbulbs, the mean is in a finite interval $[a, b]$. The compactness assumption remains valid even though interval $[a, b]$ is unknown. The third pair  is Chi-square and Gamma models. The non-nested condition is violated for this pair as the former is nested within the latter. We include this pair to show the consequence of this violation.

Column one of Table \ref{tb-1} contains the true model as well as the parameter value(s) used to generate random observations for model selection simulation. Column two contains the alternative misspecified model. Column three shows the sample size used. The last three columns contain the selection error rates of MLC, AIC, and BIC based on 10,000 simulation runs. For each run, we generate a random sample of size $n$ using the true model at the indicated parameter value(s), and then use the three criteria to select a model based on the random sample.  Each error rate is the percentage of times the criterion in the column heading selects the misspecified model out of 10,000 simulation runs. We make the following observations based on the results in Table \ref{tb-1}.

\begin{table} 
\caption{\label{tb-1} Comparison of selection error rates of MLC, AIC and BIC. The first column contains the true models and their parameter values used to generate the observed data. The second column contains the misspecified models. Each entry in the last three columns is the percentage of times the misspecified model is selected by a criterion based on 10,000 simulated random samples from the true model in the same row.} 
\centering
{\small
\begin{tabular}{cccccc} \\ 
\textsc{True model} & \textsc{Misspecified model} & $n$ &\textsc{MLC} & \textsc{AIC} & \textsc{BIC} \\ \hline \hline 
\textsc{Exp$(\lambda)$}   & \textsc{log-normal$(\mu, \sigma^2)$} & 10 & 51.29  & 21.88 & 18.59 \\
 For simulation $\lambda=1$                                            &              & 50 & 14.06 & 7.56 & 3.62\\
                                             &                                                                & 100 & 3.70 & 2.18 & 1.00\\   \hline 
                                             
\textsc{log-normal$(\mu, \sigma^2)$} & \textsc{Exp$(\lambda)$}   & 10 & 13.45  & 48.81 & 55.01 \\
        For simulation $(\mu, \sigma^2)=(0,1)$                                      &                     & 50 & 3.95 & 10.08 & 19.52 \\
                                             &                                                        & 100 & 1.04 & 2.26 & 5.34 \\   \hline \hline 
                                             
\textsc{Unif$[-a, a]$}   & \textsc{N$(\mu, \sigma^2)$} & 10 & 9.39  & 3.87 & 3.36 \\
For simulation $a=1$                                           &                    & 50 & 0.45 & 0.25 & 0.13 \\
                                             &                                                         & 100 & 0.06 & 0.05 & 0.02\\   \hline 
                                             
\textsc{N$(\mu, \sigma^2)$} & \textsc{Unif$[-a, a]$}   & 10 & 58.00  & 77.47 & 80.06 \\
         For simulation $(\mu, \sigma^2)=(0,1)$                                    &  & 50 & 5.15 & 7.47& 10.55 \\
                                             &                                & 100 & 0.22 & 0.34 & 0.52 \\   \hline \hline 

\textsc{$\chi^2_k$} & \textsc{Gamma($\theta, \kappa$)}   & 10 & 100.00  & 50.17 & 47.48  \\
For simulation $k=3 $            &                                          & 50 & 100.00 & 45.68 & 30.66 \\
                                             &                                         & 100 & 100.00 & 46.26 & 26.00\\   \hline 

\textsc{Gamma($\theta, \kappa$)}   & \textsc{$\chi^2_k$} & 10 & 0.00  & 10.45 & 11.40 \\
For simulation $(\theta, \kappa)=(3,1)$            &               & 50 & 0.00  & 0.10 & 0.35 \\
                                             &                                           & 100 & 0.00 & 0.00 & 0.00 \\   \hline \hline 

\end{tabular}
}
\end{table}

\begin{itemize}

\item[1.] \textsc{Exp($\lambda$) vs Log-normal($\mu, \sigma$)}: The true model \textsc{Exp($\lambda$)} and the misspecified alternative \textsc{Log-normal($\mu, \sigma$)} are both models for lifetime data supported on $(0, +\infty)$. When the sample size is $n=10$, all three criteria performed poorly. In particular, MLC has an error rate exceeding 50\%. This is a case where the true model \textsc{Exp($\lambda$)}  has fewer parameters, so the penalization-based criteria AIC and BIC which favour smaller models have smaller but still large error rates. As the sample size $n$ goes to $50$ and  $100$, error rates of all three criteria drop substantially, reflecting the consistency of all three criteria. With larger and larger sample sizes, the likelihood of the true model will dominate that of the misspecified model, as shown in Corollary \ref{cor1}, even if the true model has fewer parameters.

\textsc{Log-normal($\mu, \sigma$) vs Exp($\lambda$)}: The true model is now \textsc{Log-normal($\mu, \sigma$)} which has more parameters. At $n=10$, all three criteria again performed poorly. The penalty terms in AIC and BIC are counterproductive which led to high error rates. As the sample size increases, the error rates drop quickly towards zero.

\item[2.] \textsc{Unif$[-a, a]$ vs N$(\mu, \sigma^2)$}: This is an example where the two candidate models have different supports. Again, the selection error rates of all three criteria are high at the small sample size of $n=10$ but improves quickly as $n$ goes to $50$ and $100$.

\textsc{N$(\mu, \sigma^2)$ vs Unif$[-a, a]$}: One case worth mentioning is MLC at $n=10$. The error rate of MLC in this case exceeds 50\% indicating that the likelihood of the true model with 2 parameters is smaller than that of the misspecified model with only 1 parameter more than 50\% of the time. This illustrates the fact that more parameters do not always lead to higher likelihood even when the model with more parameters is the true model. It shows that MLC does not favour models with more parameters. Indeed, MLC is solely a likelihood-based criterion that does not take into consideration the number of parameters. On the other hand, AIC and BIC does in a way that discriminates against models with more parameters.

\item[3.] \textsc{$\chi^2_k$ vs Gamma($\theta, \kappa$)}: This is an example where the non-nested condition fails, so the three criteria are no longer consistent. Since $\chi^2_k$ is nested within Gamma($\theta, \kappa$), the latter model will always have a higher likelihood even though the former is the true model that generated the data. This explains the 100\% error rate of MLC. The error rates of AIC and BIC are lower than that of MLC but still high at all sample sizes. Since the data is generated by $\chi^2_k$, the better fit of Gamma($\theta, \kappa$) is a result of overfitting. This is an example that shows penalization is helpful in alleviating overfitting in nested situations where more parameters means a better fit.

\textsc{Gamma($\theta, \kappa$) vs $\chi^2_k$}: Since $\chi^2_k$ is nested within Gamma($\theta, \kappa$), the MLC has zero errors when the latter is the true model. Due to their penalty terms against models with more parameters, AIC and BIC have small but not non-zero error rates.

\end{itemize}

To summarize, penalization introduces bias against models with more parameters which is unfair when candidate models are non-nested. It should be acknowledged that penalization brings better accuracy when the true model happens to be the one with fewer parameters, but this does not justify such a bias in the setting where all candidate models are equal with no preference for models having fewer parameters; see the concluding remarks for further discussions. By letting the likelihood alone determine the selection, the MLC provides a fairer alternative to penalization-based criteria.

\section{Precip data analysis}

The precip dataset is a built-in dataset in R that records the average annual precipitation, measured in inches, for 70 cities in the United States and Puerto Rico. The values are climatological averages computed over the 30-year period from 1941 to 1970, and therefore represent long-run rainfall patterns rather than observations from a single year. The dataset exhibits substantial geographic variability, ranging from arid cities in the southwestern United States to humid coastal and southeastern locations. Because the data consist of a single quantitative variable without any explanatory variables, we consider several parametrical candidate models for this dataset.

Figure 1 shows the density histogram of this dataset with fitted Normal and Gamma density functions overlayed on top. The histogram shows that the dataset is mildly skewed. We consider the following five candidate models for this dataset: Normal, Logistic, Laplace, Gamma, and Chi-square models. The first four models are not nested.  Although the Chi-square model is nested within the Gamma model, no other one-parameter models fit better than the Chi-square model, so we include it in the comparison. 
		\begin{figure}  \label{fig-1}
\caption{Histogram of precip dataset with fitted Normal and Gamma density functions overlayed on top of the histogram. The dataset consists of average annual precipitation, measured in inches, for 70 cities in the United States and Puerto Rico.}
		\begin{center}
		\includegraphics[width=10cm]{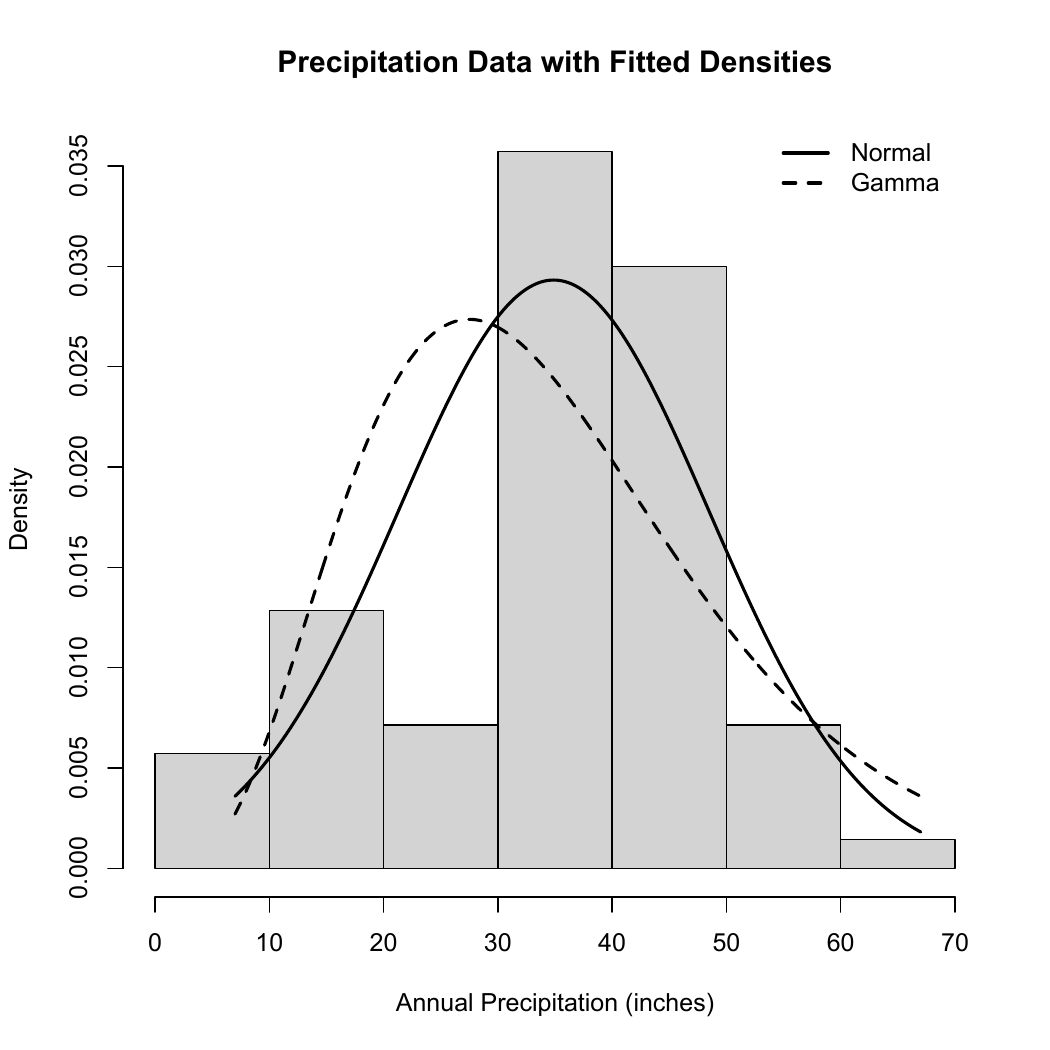}
		\end{center}
		\end{figure}

\begin{table} 
\caption{\label{tb-2} Comparison of MLC, AIC and BIC in model selection for the precip data. Five candidate models are considered: Normal, Logistic, Laplace, Gamma, and Chi-square distributions. All three criteria rank the candidate models in the same order. } 
\centering
{\small
\begin{tabular}{ccccc} \\ 
\textsc{Model} & \textsc{Parameters} &\textsc{MLC} & \textsc{AIC} & \textsc{BIC} \\ \hline \hline 
\textsc{Normal}   & 2        & -282.074  & 568.148 & 572.645 \\
\textsc{Logistic}  & 2        & -282.794  & 569.589 & 574.086 \\
\textsc{Laplace}  & 2       & -283.059  & 570.119 & 574.616 \\
\textsc{Gamma}  & 2       & -288.465  & 580.929 & 585.426 \\
\textsc{Chi-square}   & 1 & -335.769 & 673.538 & 675.786 \\

\end{tabular}
}
\end{table}

Table \ref{tb-2} shows AIC, BIC, and MLC (log-likelihood) values. By MLC, models with higher log-likelihood are better models. By AIC and BIC, models with lower AIC and BIC values are better. For this example, all three criteria rank the five models in the same order with the Normal model at the top and the Chi-square model at the bottom. This is not a mere coincidence; since the whole class of penalized criteria (\ref{p.cri}) are consistent, with sufficiently large sample sizes they will all choose the same model.

\section{Concluding remarks}

The penalty terms of AIC and BIC are most effective in regression model selection where model complexity is effectively measured by the number of predictor variables in the model and there is a clear danger of overfitting in that better and better fit can be reached with more and more variables. In such cases, the penalty terms provide a needed balance between complexity and overfitting. But in cases with only non-nested candidate models, there is no danger of overfitting. Complexity of models is also not well-defined as it can no longer be measured by the number of parameters; parameters in different models have different interpretations, so they and their numbers cannot be compared. We have also seen that more parameters does not necessarily bring better fit. Without a valid concern on overfitting and without a valid measure for complexity, there are no overfitting and complexity to be balanced. The penalty terms become unnecessary. The MLC which does not have a penalty term is a fairer criterion in this setting.


\end{document}